\newtheorem{lem}{Lemma}
\newtheorem{thm}{Theorem}
\def\rank{\mathrm{rank}}
\DeclareMathAlphabet{\bit}{OML}{cmm}{b}{it}
\def\<{\leqslant}           
\def\>{\geqslant}           
\def\d{\partial}
\def\wh{\widehat}
\def\wt{\widetilde}
\def\Re{\mathrm{Re} }   
\def\Im{\mathrm{Im} }   
\def\mR{{\mathbb R}}
\def\mC{\mathbb{C}}
\def\Tr{\mathrm{Tr}}
\def\im{\mathrm{im}}
\def\rT{\mathrm{T}}
\def\bS{\mathbf{S}}
\def\bP{\mathbf{P}}
\def\bE{\mathbf{E}}
\def\[[[{[\![\![}
\def\]]]{]\!]\!]}
\def\bra{\langle}
\def\ket{\rangle}
\def\re{\mathrm{e}}
\def\rd{\mathrm{d}}
\def\bJ{\mathbf{J}}
\def\x{\times}
\def\ox{\otimes}
\def\fF{\mathfrak{F}}
\def\fH{\mathfrak{H}}
\def\cD{\mathcal{D}}
\def\cG{\mathcal{G}}
\def\cI{\mathcal{I}}
\def\cov{\mathbf{cov}}
\def\eps{\epsilon}
\def\ups{\upsilon}
\def\Ups{\Upsilon}
\begin{document}
\title{\LARGE \bf 
Optimization of Partially Isolated Quantum Harmonic Oscillator Memory Systems by Mean Square Decoherence Time Criteria${}^*$}

\author{Igor G. Vladimirov$^{1}$, \quad Ian R. Petersen$^{2}$%
\thanks{${}^*$This work is supported by the Australian Research Council grant DP240101494.}
\thanks{$^{1,2}$School of Engineering, Australian National University, ACT 2601, Canberra,
Australia:
        {\tt\small igor.g.vladimirov@gmail.com,
        i.r.petersen@gmail.com}.}%
}

\maketitle
\thispagestyle{empty}
\pagestyle{plain}

\begin{abstract}
This paper is concerned with open quantum harmonic oscillators with position-momentum system variables, whose internal dynamics and interaction with the environment are governed by linear quantum stochastic differential equations. A recently proposed  approach to such systems as Heisenberg picture quantum memories  exploits their ability to approximately retain initial conditions over a decoherence horizon. Using the
quantum  memory decoherence time defined  previously in terms of a fidelity threshold on a weighted mean-square deviation of the system variables from their initial values, we apply this approach to a partially isolated subsystem of the oscillator, which is not directly affected by the external fields. The partial isolation leads to an appropriate system decomposition and a qualitatively different short-horizon asymptotic behaviour of the deviation, which yields a longer decoherence time in the high-fidelity limit.
The resulting approximate decoherence time  maximization over the energy parameters for improving the quantum memory performance is discussed for a coherent feedback interconnection of such systems.
\end{abstract}

\section{INTRODUCTION}

The quantum communication and quantum information processing technologies,  which undergo extensive theoretical and practical development, substantially rely on the possibility to manipulate and engineer quantum mechanical  systems with novel properties exploiting nonclassical resources. The latter come from the noncommutative operator-valued structure of quantum variables and quantum states reflected in the Heisenberg and Schr\"{o}dinger pictures of quantum dynamics and quantum probability, and the theory of quantum measurement   \cite{H_2001}.
One of the aspects of these approaches is that the fragile nature of quantum dynamics and quantum states,  involving the  atomic and subatomic scales \cite{LL_1981},  complicates the state initialization for such systems and their isolation from the environment in a controlled fashion. In particular, the controlled isolation is important in the context of quantum computation paradigms \cite{NC_2000}  where unitary transformations,  performed with closed quantum systems \cite{S_1994} between the measurements,   play a crucial role.

The absence of dissipation  makes the perfectly   reversible unitary dynamics of an isolated quantum  system (including the preservation of the Hamiltonian and of the system variables or states in the zero-Hamiltonian case)  particularly useful for storing quantum information. The storage phase is preceded by the ``writing'' stage and followed by the ``reading'' stage  which can employ, for example, photonics-based  platforms using light-matter interaction effects  (see \cite{FCHJ_2016,HEHBANS_2016,WM_2008,WLZ_2019,YJ_2014} and references therein).
However,
the ability to retain quantum states or dynamic variables over the course of time, which is required for the storage phase, is corrupted by the unavoidable coupling of the quantum system to its environment such as external fields.

The system-field interaction, even when it is weak,  gives rise to quantum noise which makes the system drift away in a dissipative fashion  from its initial condition as opposed to the ideal case of isolated zero-Hamiltonian system dynamics.
The resulting open quantum systems are modelled  by Hudson-Parthasarathy quantum stochastic differential equations (QSDEs) \cite{HP_1984,P_1992}  which are linear for open quantum harmonic oscillators (OQHOs) \cite{JNP_2008,NY_2017,P_2017,ZD_2022} and quasi-linear for finite-level quantum systems \cite{EMPUJ_2016,VP_2022_SIAM}. While the latter are particularly relevant for modelling qubit registers in the quantum computing context, continuous variables systems (such as the OQHOs  with quantum positions and momenta) are also employed in quantum information theory including the Gaussian quantum channel models \cite{H_2019}.

Using the quantum calculus framework,  an approach has recently been proposed in \cite{VP_2024_ANZCC,VP_2024_EJC} for both classes of open quantum systems to their performance as a quantum memory in the storage phase. The system performance index is a decoherence time horizon at which a weighted mean-square deviation of the system variables from their initial values reaches a given  fidelity threshold (note that different yet related decoherence measures are also discussed in \cite{VP_2023_SCL} and references therein). The memory decoherence time (or its high-fidelity asymptotic approximation) can therefore be maximized over the energy and coupling parameters of a quantum network in order to improve its ability to approximately retain  the initial conditions.

In  the present paper, we follow the approach of \cite{VP_2024_ANZCC} to OQHOs as Heisenberg picture  quantum memory systems with position-momentum variables, whose internal dynamics and interaction with the environment are governed by linear QSDEs.
Using the
previously defined memory decoherence time,
we apply this approach to a partially isolated subsystem of the oscillator, which is affected by the external fields only indirectly through another subsystem. The  partial subsystem isolation from the external fields and the related system decomposition hold under a certain rank condition  on the system-field  coupling matrix and form an important special case which was not considered in \cite{VP_2024_ANZCC}. This setting leads to a qualitatively different short-horizon asymptotic behaviour of the mean-square deviation of the subsystem variables from their initial values and yields a longer decoherence time in the high-fidelity limit.
The maximization of the resulting approximate decoherence time  over the energy parameters for improving the quantum memory performance is discussed for a coherent feedback interconnection of such systems involving the direct energy coupling and field-mediated coupling \cite{ZJ_2011a} between the constituent OQHOs.

The paper is organised as follows.
Section~\ref{sec:sys} provides a minimum background material on QSDEs for open quantum stochastic systems.
Section~\ref{sec:osc} specifies the class of OQHOs with position-momentum system variables and linear-quadratic energetics.
Section~\ref{sec:dev} considers weighted deviations of the system variables from their initial conditions and discusses a partially    isolated subsystem along with the system decomposition.
Section~\ref{sec:meansquare} reviews the mean-square quantification of such deviations and the memory decoherence time and studies the short-horizon  and high-fidelity asymptotic behaviour of these functionals for the partially isolated subsystem.
Section~\ref{sec:two} solves the approximate decoherence time maximization problem for a partially isolated subsystem of a coherent feedback interconnection of two OQHOs with direct energy and indirect field-mediated coupling.

\section{QUANTUM STOCHASTIC DYNAMICS}
\label{sec:sys}

As mentioned above, in comparison with the idealisation of completely isolated quantum dynamics, more realistic settings consider open quantum systems which interact with the environment, such as quantum fields
or classical measuring devices.
In the framework of the Hudson-Parthasarathy  quantum stochastic calculus \cite{HP_1984,P_1992}, an open quantum system subject to external quantum fields, as shown in Fig.~\ref{fig:sys},
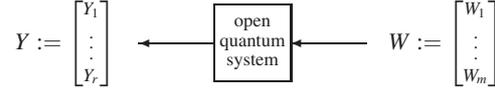
\begin{figure}[htbp]
{\centering
\unitlength=1mm
\linethickness{0.4pt}
\begin{picture}(40.00,7)
    \put(20,3.5){\makebox(0,0)[cc]{\scriptsize open}}
    \put(15,-4){\framebox(10,10)[cc]{\scriptsize quantum}}
    \put(20,-1.5){\makebox(0,0)[cc]{\scriptsize system}}
    \put(35,1){\vector(-1,0){10}}
    \put(15,1){\vector(-1,0){10}}
    \put(38,1){\makebox(0,0)[lc]{\small $W:={\scriptsize \begin{bmatrix}
      W_1\\
      \vdots\\
      W_m
    \end{bmatrix} }
    $}}
    \put(2,1){\makebox(0,0)[rc]{\small $Y:={\scriptsize \begin{bmatrix}
      Y_1\\
      \vdots\\
      Y_r
    \end{bmatrix} }$}}
\end{picture}\vskip3mm}
\caption{An open quantum system with vectors $W$, $Y$ of input quantum Wiener processes and output fields.}
\label{fig:sys}
\end{figure}
is governed by QSDEs
\begin{equation}
\label{dXdYgen}
    \rd X  = \cG(X) \rd t - i[X, L^\rT] \rd W,
    \qquad
    \rd Y  = 2DJ L \rd t + D\rd W
  \end{equation}\vskip-1.5mm\noindent
  (the time arguments will often be omitted for brevity),
  which describe the Heisenberg  evolution \cite{S_1994} of the internal and output variables.  Here, $i:=\sqrt{-1}$ is the imaginary unit, and the reduced Planck constant is set to  $\hslash = 1$ using atomic units for convenience. Also, $X:=(X_k)_{1\< k \< n}$, $Y$, $W$  are vectors of internal, output and input dynamic variables, respectively, and $L := (L_k)_{1\< k\< m}$ is a vector of system-field coupling operators.   Unless indicated otherwise,  vectors are organised as columns.  Accordingly,   $[X,L^\rT]:= (X_jL_k - L_k X_j)_{1\< j \< n, 1\< k \< m} = XL^\rT - (LX^\rT)^\rT$ is the commutator matrix.
The system variables  $X_1, \ldots, X_n$ and the  output field variables $Y_1, \ldots, Y_r$ are time-varying self-adjoint operators on the system-field tensor-product  Hilbert space
\begin{equation}
\label{fH}
    \fH:= \fH_0\ox \fF,
\end{equation}
with the system variables acting initially, at time $t=0$,  on a Hilbert space $\fH_0$. The input bosonic field variables $W_1, \ldots, W_m$  are quantum Wiener processes which are time-varying self-adjoint operators on a symmetric Fock space $\fF$.
The map $\cG$ in the drift term of the first QSDE in (\ref{dXdYgen}) is the Gorini-Kossakowski-Sudar\-shan-Lindblad superoperator \cite{GKS_1976,L_1976}  acting
on a system operator $\xi$ (such as a function of the system variables $X_1, \ldots, X_n$) as
\begin{equation}
\label{cG}
\cG(\xi)
   := i[H,\xi]
     +
     \cD(\xi).
\end{equation}
This is a quantum counterpart of the infinitesimal generators of classical diffusion processes \cite{KS_1991}, with $\cD$ the decoherence superoperator
given by
\begin{equation}
\label{cD}
    \cD(\xi)
    :=
     \frac{1}{2}
    ([L^\rT,\xi]\Omega L + L^\rT \Omega [\xi,L]).
\end{equation}
The Hamiltonian $H$ in (\ref{cG}) and the system-field coupling operators $L_1, \ldots, L_m$ in (\ref{dXdYgen}), (\ref{cD})  are self-adjoint operator-valued functions of $X_1, \ldots, X_n$. The matrix $D \in \mR^{r\x m}$ in (\ref{dXdYgen}) consists of $r\< m$  conjugate  rows of a permutation matrix of order $m$, where both  $r$ and $m$ are even. Thus,  $D$ is a co-isometry: $DD^\rT = I_r$, where $I_r$ is the identity matrix of order $r$. The meaning of $D$ is that it selects some of the $m$ output fields. For example, the case of $r=m$ and $D=I_m$ corresponds to selecting all of the output fields.
The vector $W$ of quantum Wiener processes $W_1, \ldots, W_m$, which drive the QSDEs (\ref{dXdYgen}), has the quantum Ito  table
\begin{equation}
\label{Omega}
    \rd W \rd W^\rT
    =
    \Omega \rd t,
    \qquad
    \Omega
    :=
    I_m + iJ
    =
    \Omega^*\succcurlyeq 0,
\end{equation}
where $(\cdot)^*:= {\overline{(\cdot)}}{}^{\rT}$ is the complex conjugate transpose.
Here, $J := \Im \Omega$ is a real antisymmetric matrix of order $m$ given by
\begin{equation}
\label{JJ}
    J
    :=
    I_{m/2}\ox\bJ,
    \qquad
            \bJ
        : =
        {\begin{bmatrix}
        0 & 1\\
        -1 & 0
    \end{bmatrix}}
\end{equation}
and specifying the two-point canonical commutation relations (CCRs)
\begin{equation}
\label{WWcomm}
  [W(s), W(t)^\rT]
  =
  2i \min(s,t)J,
  \qquad
   s, t \> 0,
\end{equation}
where $J$ represents its tensor product $J\ox \cI_\fF$ with the identity operator $\cI_\fF$ on the Fock space $\fF$.

\section{OPEN QUANTUM HARMONIC OSCILLATORS}
\label{sec:osc}

The CCRs (\ref{WWcomm}) are typical for unbounded operators of position-momentum type \cite{S_1994} on a dense domain of an infinite-dimensional Hilbert space. Such quantum variables, when they are considered as the system variables $X_1, \ldots, X_n$ with an even
\begin{equation}
\label{nnu}
    n: = 2\nu,
\end{equation}
are provided by the quantum mechanical  positions $q_1, \ldots, q_\nu$ and momenta $p_1, \ldots, p_\nu$ which can be implemented  as the multiplication and differential operators
\begin{equation}
\label{Xqp}
    X_{2k-1}
    := q_k,
    \qquad
    X_{2k} := p_k:= -i\d_{q_k},
    \qquad
    k = 1, \ldots, \nu,
\end{equation}
on the Schwartz space of rapidly decreasing functions \cite{RS_1980} on $\mR^\nu$. At any time $t\>0$,   these operators satisfy the one-point CCRs
\begin{equation}
\label{Xqpcomm}
    [X(t),X(t)^\rT]
    =
    2i\Theta,
    \qquad
    \Theta := \frac{1}{2}I_\nu\ox \bJ.
\end{equation}
As continuous quantum variables, the conjugate position-momentum pairs (\ref{Xqp}) are qualitatively different from those in finite-level quantum systems \cite{EMPUJ_2016,VP_2022_SIAM} and   result from quantizing the classical positions and momenta of Hamiltonian mechanics \cite{A_1989}.
They are employed as system variables in OQHOs \cite{JNP_2008,NY_2017,P_2017,ZD_2022}. For an OQHO,  the Hamiltonian $H$ and the system-field coupling operators $L_1, \ldots, L_m$  are quadratic and linear functions of the system variables specified by an energy matrix $R = R^\rT \in \mR^{n \x n}$ and a coupling matrix $M \in \mR^{m\x n}$ as
\begin{equation}
\label{HLOQHO}
    H
    :=
    \frac{1}{2} X^\rT R X,
    \qquad
    L := MX.
\end{equation}
The general QSDEs (\ref{dXdYgen}) then acquire the form of linear QSDEs
\begin{equation}
\label{dXdY_OQHO}
    \rd X  = AX \rd t +B\rd W,
    \qquad
    \rd Y  = CX \rd t + D\rd W,
\end{equation}
where the matrices  $
    A
    \in \mR^{n\x n}$,
$
    B
    \in \mR^{n\x m}$,
$
    C
    \in \mR^{r\x n}
$ are computed as
\begin{equation}
\label{ABC}
    A := A_0 + \wt{A},
    \qquad
    B := 2\Theta M^\rT,
    \qquad
    C:= 2DJM,
\end{equation}
with
\begin{equation}
\label{AAOQHO}
        A_0 := 2\Theta R,
    \qquad
    \wt{A}
    := 2\Theta M^\rT J M,
\end{equation}
in terms of the CCR matrices $J$, $\Theta$ from (\ref{JJ}),  (\ref{Xqpcomm}) and the energy and coupling matrices $R$, $M$ from (\ref{HLOQHO}).  In view of (\ref{AAOQHO}),  the dynamics matrix $A$ in (\ref{ABC}) depends linearly on $R$ and quadratically on $M$. Moreover, in addition to the noncommutative operator-valued nature of the quantum variables,
 the specific representation of the coefficients of the QSDEs (\ref{dXdY_OQHO})  in terms of the commutation, energy and coupling parameters imposes quantum physical realizability constraints \cite{JNP_2008} in contrast to classical SDEs \cite{KS_1991}.

 Despite the qualitative differences, OQHOs share some common features with classical linear stochastic systems due to the linearity of the QSDEs (\ref{dXdY_OQHO}). In particular, the solution of the first QSDE  is given by
\begin{equation}
\label{Xsol}
    X(t)
    =
    \re^{tA} X(0)
    +
    Z(t),
    \qquad
    Z(t)
    :=
    \int_0^t
    \re^{(t-s)A}
    B
    \rd W(s)
\end{equation}
for any $t\> 0$ and consists of the system responses to the initial condition $X(0)$ and the driving quantum Wiener process $W$ over the time interval $[0,t]$, with $Z(0)=0$. Here, the quantum process
$Z$ of $n$ time-varying self-adjoint operators $Z_1, \ldots, Z_n$  is adapted to the filtration of the Fock space $\fF$, and hence, $X(0)$ and $Z(t)$ commute with each other as operators acting on different spaces ($\fH_0$ and $\fF$):
\begin{equation}
\label{XZcomm}
    [X(0),Z(t)^\rT] = 0,
    \qquad
    t \> 0 .
\end{equation}
Furthermore, if the system-field quantum state on the space (\ref{fH}) has the form
\begin{equation}
\label{rho}
    \rho:= \rho_0\ox \ups,
\end{equation}
where $\rho_0$ is the initial system state on $\fH_0$, and $\ups$ is the vacuum field state \cite{P_1992} on $\fF$, then $Z$ is a Gaussian quantum process (see, for example, \cite[Section~3]{VPJ_2018a}), which is statistically independent  of $X(0)$ and has zero mean:
\begin{equation}
\label{EZ}
    \bE Z(t) = (\Tr (\ups Z_k(t)))_{1\< k \< n} = 0.
\end{equation}
Here,  use is made of the quantum expectation $
    \bE \zeta
    :=
    \Tr (\rho\zeta)
$
 of an operator $\zeta$ on the system-field space (\ref{fH}) over the quantum state   (\ref{rho}),
which reduces in (\ref{EZ})  to the averaging over the vacuum state $\ups$ since the operators $Z_k$ act  on the Fock space $\fF$. In combination with the independence between $X(0)$ and $Z(t)$, (\ref{EZ}) implies that
\begin{equation}
\label{EXZ}
  \bE(X(0)Z(t)^\rT) = \bE X(0) \bE Z(t)^\rT = 0,
\end{equation}
where
$
    \bE X(0) = (\Tr(\rho_0 X_k(0)))_{1\< k \< n}
$.
The above  properties of $Z$ hold regardless of a particular structure of the initial system state $\rho_0$.

\section{PARTIAL SUBSYSTEM ISOLATION}
\label{sec:dev}

The process $Z$ in (\ref{Xsol}), caused by the interaction of the OQHO with the external fields,  plays the role of a noise which corrupts the ability of the system as a quantum memory to retain the initial condition $X(0)$. Similarly to \cite{VP_2024_ANZCC,VP_2024_EJC},  we describe the quantum memory performance in terms of a quadratic form
\begin{equation}
\label{Q}
    Q(t) := \xi(t)^\rT \Sigma\xi(t)
\end{equation}
of the deviation
\begin{equation}
\label{xi}
     \xi(t)
    :=
    X(t)-X(0)
    =
    \alpha_t X(0) + Z(t)
\end{equation}
of the system variables at  time $t\> 0$ from their initial values, where
\begin{equation}
\label{exp}
  \alpha_t := \re^{tA}-I_n,
\end{equation}
with $\xi(0) = 0$, $Q(0) = 0$ and $\alpha_0=0$.
Here,
$\Sigma$ is a real positive semi-definite symmetric matrix of order $n$ factorised as
\begin{equation}
\label{FF}
    \Sigma := F^\rT F,
    \qquad
    F \in \mR^{s\x n},
    \qquad
    s
    :=
    \rank \Sigma \< n,
\end{equation}
so that the rows of $F$ specify the coefficients  of $s$ independent linear combinations of the deviations in (\ref{xi}) forming the vector
\begin{equation}
\label{Fxi}
    \eta(t):= F\xi(t) = FX(t) - FX(0) = F\alpha_t X(0) + FZ(t),
\end{equation}
in terms of which the quantum process $Q$ in (\ref{Q}) is represented as
\begin{equation}
\label{Qeta}
  Q(t)
  :=
  \eta(t)^\rT \eta(t).
\end{equation}
As discussed below,  the matrix $F$  can be used for selecting  those linear combinations (in particular, a subset) of the system variables,  whose dynamics are affected by the external fields only indirectly and thus pertain to a partially isolated subsystem of the OQHO.

\begin{lem}
\label{lem:iso}
Suppose the coupling matrix $M$ of the OQHO in (\ref{HLOQHO}) satisfies
\begin{equation}
\label{Mrank}
    d:= n-\rank M>0.
\end{equation}
Then for any $s\< d$,
there exists a full row rank matrix $F \in \mR^{s\x n}$ such that the vector
\begin{equation}
\label{FX}
    \varphi(t):= FX(t)
\end{equation}
of $s$ time-varying self-adjoint operators on the space (\ref{fH})
satisfies the ODE
\begin{equation}
\label{phidot}
    \mathop{\varphi}^{\centerdot}(t) = G X(t),
    \qquad
    G := FA_0,
\end{equation}
where $A_0$ is the matrix from (\ref{AAOQHO}). \hfill$\square$
\end{lem}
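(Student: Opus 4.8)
The plan is to observe that $\varphi = FX$ satisfies a genuine (noise-free) ODE exactly when $F$ annihilates the diffusion matrix $B$ of the linear QSDE (\ref{dXdY_OQHO}), and that the rank condition (\ref{Mrank}) leaves a subspace of the right dimension from which to draw the rows of such an $F$ of any prescribed rank $s\< d$.

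First I would count the left null space of $B$. By (\ref{ABC}), $B = 2\Theta M^\rT$, and the CCR matrix $\Theta = \frac12 I_\nu\ox\bJ$ from (\ref{Xqpcomm}) is nonsingular, so $\rank B = \rank M^\rT = \rank M$. Hence the subspace $\{v\in\mR^n:\ v^\rT B = 0\}$ has dimension $n - \rank M = d > 0$. For any $s\< d$, I would choose $s$ linearly independent vectors in this subspace and stack them as the rows of $F\in\mR^{s\x n}$; then $F$ has full row rank and $FB = 0$ by construction.

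Next I would substitute $\varphi(t) = FX(t)$ into (\ref{dXdY_OQHO}), obtaining $\rd\varphi = F\,\rd X = FAX\,\rd t + FB\,\rd W = FAX\,\rd t$, where the stochastic term vanishes by the choice of $F$; thus $\varphi$ is (entrywise) absolutely continuous in $t$ with $\mathop{\varphi}^{\centerdot} = FAX$. Finally I would reduce the drift from $A$ to $A_0$: using the decomposition $A = A_0 + \wt A$ with $\wt A = 2\Theta M^\rT JM = BJM$ from (\ref{ABC})--(\ref{AAOQHO}), it follows that $F\wt A = (FB)JM = 0$, so $FAX = FA_0X = GX$ with $G := FA_0$, which is precisely (\ref{phidot}).

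The argument is essentially linear-algebraic and I do not anticipate a genuine obstacle; the two points worth stating carefully are the identity $\rank B = \rank M$ (which is where nonsingularity of $\Theta$ enters) and the observation that the \emph{same} condition $FB = 0$ which removes the quantum noise also kills the $\wt A$-part of the drift, because $\wt A$ factors through $B$. Everything beyond this is interpretive: $\varphi$ plays the role of the variables of a partially isolated subsystem of the OQHO, driven by the oscillator's variables through $G = FA_0$ but not directly by the external fields $W$.
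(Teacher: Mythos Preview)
Your proof is correct and follows essentially the same route as the paper's own proof: both use nonsingularity of $\Theta$ to get $\rank B = \rank M$, choose $F$ with $FB=0$ from the resulting $d$-dimensional left null space of $B$, substitute into the QSDE (\ref{dXdY_OQHO}) to kill the diffusion term, and then use the factorisation $\wt A = BJM$ to reduce $FA$ to $FA_0$. The only cosmetic difference is that the paper phrases the null-space step as ``rows of $F\Theta$ orthogonal to the columns of $M^\rT$'', which is the same condition $FB=0$ written out.
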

\begin{proof}
Under the condition (\ref{Mrank}), the rows of the matrix $M^\rT \in \mR^{n\x m}$ are linearly  dependent and its columns span in $\mR^n$  a proper subspace $\im(M^\rT)$ of dimension  $\rank M$, whose orthogonal complement $\ker M$  has dimension $d$.
In combination with the nonsingularity of the CCR matrix $\Theta$ in (\ref{Xqpcomm}), this implies that for any $s\< d$, there exists a full row rank matrix $F \in \mR^{s \x n}$ such that the rows of $F\Theta$ are orthogonal to all the columns of $M^\rT$, that is,  $F\Theta M^\rT = 0$.  By the parameterization of the matrix $B$ in (\ref{ABC}), any such $F$ satisfies
\begin{equation}
\label{FB0}
    FB =  2F \Theta M^\rT = 0,
\end{equation}
and hence, the left multiplication of the first QSDE in (\ref{dXdY_OQHO}) by $F$ yields
\begin{equation}
\label{dphi}
    \rd \varphi
    =
    FA X \rd t + FB\rd W
    =
    FA X \rd t
\end{equation}
for the quantum process (\ref{FX}).
Since the diffusion term in (\ref{dphi}) vanishes, this QSDE is an ODE:  $\dot{\varphi}= FAX$,  which establishes (\ref{phidot}) since
\begin{equation}
\label{FAG}
    FA = FA_0 + FB JM = G
\end{equation}
by the first equality in (\ref{ABC}) and the representation $\wt{A} = BJM$ of $\wt{A}$ in (\ref{AAOQHO}) in terms of  $B$.
\end{proof}

Since $\rank M \< \min (m,n)\< m$, then a sufficient condition for (\ref{Mrank})  is provided by the inequality
$    n>m
$.
Furthermore, if the null spaces of the matrices $F$, $G$ in (\ref{phidot}) satisfy $\ker F\subset \ker  G$ and hence,
$
    G = N F
$
for some matrix $N \in \mR^{s \x s}$, then the ODE in (\ref{phidot}) becomes autonomous:
$
    \dot{\varphi}= N \varphi
$, in which case, the dynamics of $\varphi$ are completely isolated from the external fields.

More generally,
the matrix $F$ from Lemma~\ref{lem:iso} can be augmented by a matrix $T \in \mR^{(n-s) \x n}$ to a nonsingular $(n\x n)$-matrix:
\begin{equation}
\label{SFT}
    \det S\ne 0,
    \qquad
    S :=
    {\begin{bmatrix}
      F\\
      T
    \end{bmatrix}}
    \in \mR^{n\x n}.
\end{equation}
By partitioning the inverse matrix $S^{-1}$ into blocks $S_1 \in \mR^{n\x s}$ and $S_2 \in \mR^{n \x (n-s)}$ as
$
    S^{-1}
    :=
    \begin{bmatrix}
      S_1 & S_2
    \end{bmatrix}
$
and introducing a quantum process $\psi$ of $n-s$ time-varying self-adjoint operators on (\ref{fH}),
so that
\begin{equation}
\label{Xzeta}
    X = S^{-1}\zeta
    =
    S_1 \varphi + S_2 \psi,
    \qquad
    \zeta
    :=
    {\begin{bmatrix}
      \varphi\\
      \psi
    \end{bmatrix}},
    \qquad
    \psi := T X,
\end{equation}
it follows that the first QSDE in (\ref{dXdY_OQHO}) can be decomposed into an ODE and a QSDE with respect to the transformed system variables as
\begin{equation}
\label{phipsi}
    \mathop{\varphi}^{\centerdot} = a_{11}\varphi + a_{12}\psi,
    \qquad
    \rd \psi
    =
    (a_{21}\varphi + a_{22}\psi) \rd t + b \rd W,
\end{equation}
where
\begin{equation}
\label{ab}
    a:=
        SAS^{-1}
        :=
    {\begin{bmatrix}
      a_{11} & a_{12}\\
      a_{21} & a_{22}
    \end{bmatrix}}
    =
    {\begin{bmatrix}
      GS_1 & GS_2\\
      TAS_1 & TAS_2
    \end{bmatrix}},
    \qquad
    b := TB.
\end{equation}
Therefore, the internal dynamics of the OQHO can be represented as a feedback interconnection of linear systems $\Phi$ and $\Psi$, with the corresponding vectors $\varphi$, $\psi$  of system variables, where only $\Psi$ interacts directly with the quantum Wiener process $W$, as shown in Fig.~\ref{fig:PhiPsi}.
\begin{figure}[htbp]
{\centering
\unitlength=0.8mm
\linethickness{0.4pt}
\begin{picture}(35.00,15)
    \put(15,-1){\makebox(0,0)[cc]{\small $\varphi$}}
    \put(15,11){\makebox(0,0)[cc]{\small $\psi$}}
    \put(0,0){\framebox(10,10)[cc]{\small$\Phi$}}
    \put(20,0){\framebox(10,10)[cc]{\small$\Psi$}}
    \put(40,5){\vector(-1,0){10}}
    \put(20,8){\vector(-1,0){10}}
    \put(10,2){\vector(1,0){10}}
    \put(43,5){\makebox(0,0)[lc]{\small $W$}}
\end{picture}
\caption{A schematic representation of (\ref{phipsi}) as an interconnection of systems $\Phi$, $\Psi$, where $\Phi$ is affected by the external fields $W$ only through $\Psi$ which interacts with $W$.}
\label{fig:PhiPsi}}
\end{figure}
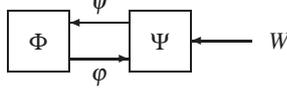
With a slight abuse of notation, the transfer functions  of these systems are computed in terms of the matrices (\ref{ab}) as
\begin{align}
\label{Phi}
    \Phi(u)
    & = (uI_s - a_{11})^{-1}a_{12},\\
\label{Psi}
    \Psi(u)
    & =
    \begin{bmatrix}
      \Psi_1(u) & \Psi_2(u)
    \end{bmatrix}
    =
    (uI_{n-s} - a_{22})^{-1}
    \begin{bmatrix}
      a_{21} & b
    \end{bmatrix}
\end{align}
and relate the Laplace transforms
\begin{align}
\label{phihat_psihat}
    \wh{\varphi}(u)
    & :=
    \int_0^{+\infty}
    \re^{-ut}
    \varphi(t)
    \rd t,
    \qquad
    \wh{\psi}(u)
    :=
    \int_0^{+\infty}
    \re^{-ut}
    \psi(t)
    \rd t,\\
\label{What}
    \wh{W}(u)
    & :=
    \int_0^{+\infty}
    \re^{-ut}
    \rd W(t)
\end{align}
(with the integral in (\ref{What}) being of the Ito type in contrast to those in (\ref{phihat_psihat}))
of the quantum processes $\varphi$, $\psi$, $W$   for all $u \in \mC$ with sufficiently large $\Re u>0$ for convergence of the integrals as
\begin{align}
\label{phihat1}
    \wh{\varphi}(u)
    & =
     \chi_1(u) \zeta(0)
     +
    \Phi(u) \wh{\psi}(u),\\
\nonumber
    \wh{\psi}(u)
    & =
     \chi_2(u) \zeta(0)
     +
    \Psi(u)
    {\begin{bmatrix}
        \wh{\varphi}(u)\\
        \wh{W}(u)
    \end{bmatrix}}\\
\label{psihat1}
    & =
     \chi_2(u) \zeta(0)
     +
    \Psi_1(u)\wh{\varphi}(u) + \Psi_2(u)\wh{W}(u).
\end{align}
Here, $\chi_1$, $\chi_2$ are $(s \x n)$ and  $(n-s)\x n$-blocks of the transfer function
$    \chi(u)
    :=
    {\small\begin{bmatrix}
      \chi_1(u)\\
      \chi_2(u)
    \end{bmatrix}}
    =
    (u I_n - a)^{-1}
$
associated with the response of the quantum process $\zeta$ in (\ref{Xzeta}) to its initial condition $\zeta(0)$. The equations (\ref{phihat1}), (\ref{psihat1}) can be solved for $\wh{\varphi}$ as
\begin{align}
\nonumber
    \wh{\varphi}(u)
    & =
    (\chi_1(u) + \Gamma(u) \chi_2(u))\zeta(0)
    +
    \Gamma(u) \Psi_2(u) \wh{W}(u) \\
\label{phisol}
    & =
    \begin{bmatrix}
    \chi_1(u) + \Gamma(u) \chi_2(u)
    &
    \Gamma(u) \Psi_2(u)
    \end{bmatrix}
    {\begin{bmatrix}
      \zeta(0)\\
      \wh{W}(u)
    \end{bmatrix}},
\end{align}
where $\Gamma$ is an auxiliary transfer function associated with (\ref{Phi}), (\ref{Psi}) by
\begin{equation}
\label{Gamma}
  \Gamma(u)
  :=
  \Phi(u) (I_{n-s} - \Psi_1(u)\Phi(u))^{-1}.
\end{equation}
The Laplace transform of the quantum process $\eta$ in (\ref{Fxi}) is then found by substituting (\ref{phisol}) into
\begin{equation}
\label{etahat}
    \wh{\eta}(u)
    :=
    \int_0^{+\infty}
    \re^{-ut}
    \eta(t)\rd t
    =
    \wh{\varphi}(u) - \frac{1}{u}\varphi(0),
\end{equation}
where $\varphi(0) = FX(0)$ from (\ref{FX})  is the subvector of $\zeta(0)$ in (\ref{Xzeta}). Note that the resulting frequency-domain representation (\ref{etahat}) of $\eta$ does not depend on a particular choice of the matrix $T$ in (\ref{SFT}).
In view of (\ref{phisol}), the subsystem $\Phi$ with the vector $\varphi$ of transformed system variables can be approximately isolated from the external fields $W$  by making the transfer function $\Gamma \Psi_2$ ``small'' in a suitable sense.  By (\ref{Gamma}), this requires ``smallness'' not only of $\Phi$ and $\Psi_2$, but also of $\Psi_1 \Phi$ due to the presence of the feedback loop in Fig.~\ref{fig:PhiPsi}, which is typical for small-gain theorem arguments (see, for example, \cite{DJ_2006} and references therein).

\section{MEAN-SQUARE MEMORY DECOHERENCE TIME}
\label{sec:meansquare}

Following \cite{VP_2024_ANZCC,VP_2024_EJC}, for any given time $t\> 0$, we will quantify the ``size'' of the deviation $\eta(t)$ in  (\ref{Fxi})   by a mean-square functional
\begin{equation}
\label{Del0}
    \Delta(t)
     :=
    \bE
    Q(t)
    =
    \bra
    \Sigma,
    \Re \Ups(t)
    \ket=
    \|F \alpha_t \sqrt{P}\|^2 + \bra \Sigma, \Re V(t) \ket.
\end{equation}
This quantity provides an  upper bound  on the tail probabilities for the positive semi-definite self-adjoint quantum variable $Q(t)$ in (\ref{Q}), (\ref{Qeta}):
\begin{equation}
    \label{tail}
        \bP_t([z,+\infty))
        \<
        \frac{\Delta(t)}{z},
        \qquad
        z > \Delta(t),
\end{equation}
where Markov's inequality \cite{S_1996}
is applied to the probability distribution $\bP_t(\cdot)$ of $Q(t)$
on $[0,+\infty)$ obtained by averaging its spectral measure \cite{H_2001}  over the quantum state (\ref{rho}). In (\ref{Del0}), use is made of the Frobenius norm $\|\cdot\|$   and inner product  $\bra\cdot, \cdot\ket $   of matrices \cite{HJ_2007} along with the one-point second-moment matrix  of the process $\xi$:
$    \Ups(t)
     :=
    \bE (\xi(t)\xi(t)^\rT)
    =
    \alpha_t
    \Pi
    \alpha_t^\rT
    +
    V(t)
$.
The latter is computed in \cite{VP_2024_ANZCC} by using (\ref{XZcomm}), (\ref{EXZ}), (\ref{xi}), (\ref{exp}) and the second-moment matrix
\begin{equation}
\label{EXX0}
    \Pi
     :=
    \bE (X(0)X(0)^\rT)
    =
    P + i\Theta,
    \qquad
    P:= \Re \Pi
\end{equation}
of the initial system variables $X_1(0), \ldots, X_n(0)$, including  their CCR matrix $\Theta$ from (\ref{Xqpcomm}),
and the quantum covariance matrix
\begin{equation}
\label{V}
    V(t)
      :=
    \bE(Z(t)Z(t)^\rT)
     =
    \int_0^t
    \re^{sA}
    B
    \Omega
    B^\rT
    \re^{sA^\rT}
    \rd s,
    \qquad
    t \> 0
\end{equation}
of the process $Z$ in (\ref{Xsol}), with
$\Omega$ the quantum Ito matrix from (\ref{Omega}). The matrix $V(t)$ in (\ref{V}),  which  coincides with the controllability Gramian of the pair $(A, B\sqrt{\Omega})$ over the time interval $[0,t]$,  satisfies  the initial value problem for the Lyapunov ODE
\begin{equation}
\label{VALE}
    \mathop{V}^{\centerdot}(t)
    =
    AV(t) + V(t)A^\rT + \mho,
    \qquad
    t \> 0,
    \quad
    V(0) = 0,
\end{equation}
where
\begin{equation}
\label{BOB}
    \mho := B\Omega B^\rT.
\end{equation}
From (\ref{VALE}), the time derivatives $V^{(k)}:= \rd^k V/\rd t^k$ can be computed recursively as
$
    V^{(k+1)} = AV^{(k)} + V^{(k)}A^\rT  + \delta_{k0} \mho$ for any $
    k \> 0
$,
where $\delta_{jk}$ is the Kronecker delta. In particular, the first three derivatives at $t=0$  are
\begin{align}
\label{V1}
    \dot{V}(0)
    & = \mho,\\
\label{V2}
    \ddot{V}(0)
    & = A\mho  + \mho A^\rT ,\\
\label{V3}
    \dddot{V}(0)
    & =
    A^2\mho + \mho (A^\rT)^2 + 2 A \mho A^\rT.
\end{align}
In what follows, we will use the short-horizon asymptotic behaviour of (\ref{Del0}) described below.

\begin{lem}
\label{lem:Del0asy}
Suppose the condition (\ref{Mrank}) of Lemma~\ref{lem:iso} is satisfied and the matrix $F$ in (\ref{FF}) is chosen  so as to satisfy (\ref{FB0}). Then the mean-square deviation functional (\ref{Del0}) behaves asymptotically as
\begin{equation}
\label{Del0asy}
  \Delta(t)
  =
  \|G \sqrt{P}\|^2 t^2 + O(t^3),
  \qquad
  {\rm as}\
  t \to 0+,
\end{equation}
where the coefficient is
computed in terms of the matrices $G$ from (\ref{phidot}) and $P$ from (\ref{EXX0}).
\hfill$\square$
\end{lem}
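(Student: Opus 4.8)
The plan is to split the last expression for $\Delta(t)$ in (\ref{Del0}) into its deterministic part $\|F\alpha_t\sqrt{P}\|^2$ and its quantum-noise part $\bra\Sigma,\Re V(t)\ket$, Taylor-expand each one around $t=0$, and exploit the two identities granted by the hypotheses: $FB=0$ from (\ref{FB0}) and, consequently, $FA=G$ via (\ref{FAG}). For the deterministic part, the matrix $A$ is real and $\alpha_t=\re^{tA}-I_n=tA+O(t^2)$ as $t\to 0+$, so $F\alpha_t=tFA+O(t^2)=tG+O(t^2)$ by (\ref{FAG}), and squaring the Frobenius norm gives
\begin{equation*}
  \|F\alpha_t\sqrt{P}\|^2
  =
  \|tG\sqrt{P}+O(t^2)\|^2
  =
  \|G\sqrt{P}\|^2\, t^2+O(t^3),
\end{equation*}
which already produces the leading term of (\ref{Del0asy}).

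It then remains to show that the noise part is negligible to this order, i.e.\ $\bra\Sigma,\Re V(t)\ket=O(t^3)$. The function $V$ is entire (it solves the constant-coefficient Lyapunov ODE (\ref{VALE}) with $V(0)=0$), so $V(t)=t\dot V(0)+\frac{t^2}{2}\ddot V(0)+O(t^3)$ with $\dot V(0)$, $\ddot V(0)$ as in (\ref{V1}), (\ref{V2}). Since $\Sigma=F^\rT F$ is real, only $\Re V$ contributes, and $\Re\mho=BB^\rT$ because $\mho=B\Omega B^\rT$ with $\Omega=I_m+iJ$ and $J$ real, by (\ref{Omega}), (\ref{BOB}). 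Now I would evaluate the two lowest-order coefficients: the $O(t)$ one is $\bra F^\rT F,BB^\rT\ket=\Tr(FBB^\rT F^\rT)=\|FB\|^2$, and the $O(t^2)$ one is $\frac12\bra F^\rT F,ABB^\rT+BB^\rT A^\rT\ket=\bra FB,FAB\ket$ (using the cyclic property of the trace and the symmetry of $\Sigma$). Both vanish by (\ref{FB0}), so $\bra\Sigma,\Re V(t)\ket=\frac{t^3}{6}\bra\Sigma,\Re\dddot V(0)\ket+O(t^4)=O(t^3)$, with $\dddot V(0)$ as in (\ref{V3}). Adding the two estimates yields (\ref{Del0asy}).

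The argument is essentially a bookkeeping of Taylor coefficients, and there is no serious obstacle; the one point deserving attention — and the actual substance of the lemma — is that the rank condition (\ref{Mrank}), through the choice of $F$ with $FB=0$, annihilates exactly the $O(t)$ term (and, as the computation reveals, also the $O(t^2)$ term) that the noise covariance would otherwise contribute to $\Delta(t)$, thereby changing the short-horizon growth from the generic $O(t)$ behaviour of \cite{VP_2024_ANZCC} to the $O(t^2)$ behaviour in (\ref{Del0asy}). The remainders are legitimate because both $t\mapsto\re^{tA}$ and $t\mapsto V(t)$ are entire matrix functions, so Taylor's theorem provides error terms that are uniformly $O(t^3)$ on a compact neighbourhood of $t=0$.
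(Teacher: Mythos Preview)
Your proof is correct and follows essentially the same route as the paper: split $\Delta(t)$ into its deterministic part $\|F\alpha_t\sqrt{P}\|^2$ and its noise part $\bra\Sigma,\Re V(t)\ket$, Taylor-expand each, and use $FB=0$ together with $FA=G$ to kill the $O(t)$ and $O(t^2)$ noise contributions. The only cosmetic difference is that the paper packages the same computation as evaluating $\Delta(0)$, $\dot\Delta(0)$, $\ddot\Delta(0)$ directly (equations (\ref{D0})--(\ref{D2}) and (\ref{Del0ders})) before invoking Taylor's theorem on $\Delta$ itself, whereas you expand the two summands separately; the arithmetic and the key cancellations are identical.
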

\begin{proof}
From a combination of (\ref{Del0})  with (\ref{VALE})--(\ref{V2}), it follows that
\begin{align}
\label{D0}
    \Delta(0)
    & = \bra \Sigma, \Re V(0)\ket  =0,\\
\label{D1}
    \dot{\Delta}(0)
    & = \bra \Sigma, \Re \dot{V}(0)\ket  =\|FB\|^2,\\
\nonumber
    \ddot{\Delta}(0)
    & =
    2 \|F A \sqrt{P}\|^2 +
    \bra \Sigma, \Re \ddot{V}(0)\ket\\
\label{D2}
    & =
    2 (\|F A \sqrt{P}\|^2 +
     \bra FB, FAB\ket) .
\end{align}
Now, let the matrix $F$ satisfy (\ref{FB0}). The existence of such an $F$ is guaranteed by (\ref{Mrank}) according to Lemma~\ref{lem:iso} and its proof. Then $F\dot{V}(0) F^\rT
     =
    F\mho F^\rT = 0$, $F \ddot{V}(0) F^\rT
     =
    F(A\mho  + \mho A^\rT)F^\rT  =0$ and $
    F \dddot{V}(0)F^\rT
     =
    F(A^2\mho + \mho (A^\rT)^2 + 2 A \mho A^\rT)F^\rT
    =
    2G\mho G^\rT$
in view of (\ref{BOB})--(\ref{V3}),
so that
the external fields manifest themselves in (\ref{Fxi}) starting from the third-order term of the Taylor series approximation
\begin{equation}
\label{covFZ}
    \cov(FZ(t))
    =
    FV(t)F^\rT
    =
    \frac{t^3}{3} G\mho G^\rT + O(t^4),
    \qquad
    {\rm as}\
    t \to 0+.
\end{equation}
Accordingly, their contributions through the terms $\bra \Sigma, \Re V^{(k)}(0)\ket$,  with $k=1, 2$,  vanish, thus reducing
(\ref{D1}), (\ref{D2}) to
\begin{equation}
\label{Del0ders}
    \dot{\Delta}(0) = 0,
    \qquad
  \ddot{\Delta}(0)
  =
  2\|G \sqrt{P}\|^2,
\end{equation}
where use is also made of (\ref{FAG}). Substitution of (\ref{D0}), (\ref{Del0ders}) into an appropriately   truncated Taylor series expansion of $\Delta$   leads to (\ref{Del0asy}).
\end{proof}

Due to the special choice of the matrix $F$ in Lemmas~\ref{lem:iso} and \ref{lem:Del0asy},  the leading term in (\ref{Del0asy}) is quadratic (rather than linear) in time. This is a consequence of the partial isolation of the subsystem $\Phi$ from the external fields (see (\ref{covFZ}) and Fig.~\ref{fig:PhiPsi}) and  makes the mean-square deviation functional $\Delta$ grow slower (at least at short time horizons) compared to the case of arbitrary matrices $F$ considered in \cite{VP_2024_ANZCC}.

From the viewpoint of optimizing a network of OQHOs as a quantum memory system,  the minimization  (at a suitably chosen time $t> 0$) of the mean-square deviation functional $\Delta(t)$ in (\ref{Del0}) over admissible energy and coupling parameters of the network also minimizes the tail probability bounds (\ref{tail}) and  provides a relevant performance criterion  for such applications. This minimization improves
the ability of the OQHO   as a quantum memory  to retain its initial system variables and   is closely related (by duality) to maximizing the  decoherence time
\begin{equation}
\label{tau0}
    \tau(\eps)
    :=
    \inf
    \{
        t\> 0:\
        \Delta(t)
        >
        \eps \|F\sqrt{P}\| ^2
    \}
\end{equation}
proposed in \cite{VP_2024_ANZCC,VP_2024_EJC} as a horizon by which the selected system variables do not deviate too far from their initial values.
In accordance with (\ref{FF}), (\ref{EXX0}),
the quantity $\bE ((FX(0))^\rT FX(0) ) = \bra \Sigma, \Pi\ket = \|F\sqrt{P}\|^2$ in (\ref{tau0}) provides a reference scale, with respect to which the  dimensionless parameter
$\eps>0$ specifies a relative error threshold  for $F X(t)$ to approximately reproduce $F X(0)$. Since the columns of the matrix $F$ in (\ref{FF}) are linearly dependent if $s<n$, we assume that
\begin{equation}
\label{Fpos}
    F
    \sqrt{P}
    \ne 0
\end{equation}
in order to avoid the trivial case $\tau(\eps) = 0$.
In the partial isolation setting of Lemmas~\ref{lem:iso} and~\ref{lem:Del0asy}, where $F$ satisfies (\ref{FB0}) (as opposed to the case of $FB\ne 0$ studied in \cite{VP_2024_ANZCC,VP_2024_EJC}), the asymptotic behaviour of (\ref{tau0}) in the high-fidelity limit (of small values of $\eps$) is as follows.

\begin{thm}
\label{th:tauasy}
Suppose the conditions of Lemma~\ref{lem:Del0asy} are satisfied together with (\ref{Fpos}) and
\begin{equation}
\label{Gpos}
    G
    \sqrt{P}
    \ne 0.
\end{equation}
Then the memory decoherence time (\ref{tau0}) behaves asymptotically as
\begin{equation}
\label{tau0asy}
    \tau(\eps)
    \sim
    \frac{\|F\sqrt{P}\|}{\|G\sqrt{P}\|}
    \sqrt{\eps}
    =:
    \wh{\tau}(\eps),
    \qquad
    {\rm as}\
    \eps\to 0+.
\end{equation}
\hfill$\square$
\end{thm}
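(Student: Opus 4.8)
The plan is to obtain \eqref{tau0asy} by inverting the short-horizon expansion of $\Delta$ supplied by Lemma~\ref{lem:Del0asy}. First I would record the structural facts that make this inversion legitimate. The map $t\mapsto\Delta(t)$ is continuous on $[0,+\infty)$ (in fact real-analytic, being built from the matrix exponentials $\alpha_t = \re^{tA}-I_n$ and the Gramian $V(t)$ in \eqref{V} via \eqref{Del0}), and $\Delta(0)=0$. By \eqref{Del0asy} together with the hypothesis \eqref{Gpos}, it satisfies $\Delta(t) = c\,t^2 + O(t^3)$ as $t\to 0+$, where $c:=\|G\sqrt{P}\|^2 > 0$. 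Hence there exist $t_0>0$ and constants $0<c_-<c<c_+$ such that $c_-t^2 \leq \Delta(t) \leq c_+ t^2$ for $t\in[0,t_0]$ and $\Delta$ is strictly increasing on $[0,t_0]$.

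Next I would localize the threshold crossing. By \eqref{Fpos}, the reference scale $\|F\sqrt{P}\|^2$ in \eqref{tau0} is strictly positive, so for all sufficiently small $\eps>0$ the level $\eps\|F\sqrt{P}\|^2$ lies strictly between $\Delta(0)=0$ and $\Delta(t_0)$. By the continuity and strict monotonicity of $\Delta$ on $[0,t_0]$, the equation $\Delta(\tau) = \eps\|F\sqrt{P}\|^2$ then has a unique solution $\tau=\tau(\eps)\in(0,t_0)$, and, since $\Delta$ stays below the level on $[0,\tau(\eps))$, this solution coincides with the infimum defining $\tau(\eps)$ in \eqref{tau0}. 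The two-sided bound $c_-\,\tau(\eps)^2 \leq \eps\|F\sqrt{P}\|^2 \leq c_+\,\tau(\eps)^2$ shows in particular that $\tau(\eps)\to 0+$ as $\eps\to 0+$, so the crossing occurs in the regime where the expansion of Lemma~\ref{lem:Del0asy} is valid.

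Finally I would substitute the expansion into the defining equation: $c\,\tau(\eps)^2 + O(\tau(\eps)^3) = \eps\|F\sqrt{P}\|^2$, that is, $\tau(\eps)^2\bigl(c + O(\tau(\eps))\bigr) = \eps\|F\sqrt{P}\|^2$. Dividing by $c + O(\tau(\eps))\to c>0$ and taking square roots gives $\tau(\eps) = \sqrt{\eps}\,\|F\sqrt{P}\|/\sqrt{c + O(\tau(\eps))} = \sqrt{\eps}\,\|F\sqrt{P}\|/\sqrt{c}\,\bigl(1+o(1)\bigr)$ as $\eps\to 0+$, which is exactly \eqref{tau0asy} upon writing $\sqrt{c} = \|G\sqrt{P}\|$. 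The step that needs the most care — and the main obstacle — is the middle one: $\tau(\eps)$ is defined through an infimum over a superlevel set of $\Delta$, not as an explicit inverse function, so one must genuinely verify that for small $\eps$ the first crossing happens inside the interval $[0,t_0]$ where $\Delta$ is monotone and quadratically dominated, thereby excluding any earlier spurious crossing; it is precisely the positivity of the coefficient $c$ guaranteed by \eqref{Gpos} that yields the local monotonicity lower bound making this argument work.
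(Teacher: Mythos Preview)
Your argument is correct and follows essentially the same route as the paper: establish that $\Delta(\tau(\eps))=\eps\|F\sqrt{P}\|^2$, show $\tau(\eps)\to 0$ as $\eps\to 0+$, and then substitute the quadratic expansion of Lemma~\ref{lem:Del0asy} to read off the asymptotic. The paper's proof is considerably terser --- it simply asserts the level-set identity and the limit $\tau(\eps)\to 0$ and then combines them with \eqref{Del0asy} --- whereas you supply the local monotonicity and two-sided quadratic bounds that justify those assertions and rule out spurious earlier crossings; this extra care is welcome but does not change the underlying strategy.
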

\begin{proof}
By (\ref{tau0}), the memory decoherence time $\tau(\eps)$ is a nondecreasing function of the fidelity parameter $\eps$ satisfying
\begin{equation}
\label{level}
    \Delta(\tau(\eps))
    =
    \eps \|F \sqrt{P}\|^2,
\end{equation}
so that $\tau(\eps)>0$ for any $\eps>0$ due to (\ref{Fpos}) and since $\Delta(t)$ in (\ref{Del0}) is a continuous function of $t\> 0$,   with $\Delta(0) = 0$. Furthermore,
\begin{equation}
\label{taulim}
    \lim_{\eps \to 0+} \tau(\eps) = 0.
\end{equation}
By combining (\ref{level}), (\ref{taulim}) with the asymptotic relation (\ref{Del0asy}) under the condition (\ref{Gpos}), it follows that
$
    \eps\|F\sqrt{P}\|^2
    \sim
    \|G \sqrt{P}\|^2 \tau(\eps)^2
$,     as $
    \eps \to 0+$,
which leads to (\ref{tau0asy}).
\end{proof}

As a consequence of the partial subsystem isolation,   the asymptotic behaviour (\ref{tau0asy}) is qualitatively different and yields a longer decoherence time compared to the case of $FB\ne 0$ in \cite{VP_2024_ANZCC}, where $\tau(\eps)$ is asymptotically linear with respect to $\eps$. In the framework of (\ref{tau0asy}),  the maximization of $\tau(\eps)$ at a given small value of $\eps$ can be replaced with its ``approximate''  version
\begin{equation}
\label{tauhatmax}
  \wh{\tau}(\eps)
  \longrightarrow
  \sup.
\end{equation}
With $F$ and $P$ being fixed, (\ref{tauhatmax}) is equivalent to
the minimization of the denominator
\begin{equation}
\label{GP}
    \|G \sqrt{P}\| = 2\|F\Theta R \sqrt{P}\|
\end{equation}
and is organised as a convex quadratic optimization problem over the energy matrix $R$.

\section{MEMORY DECOHERENCE TIME MAXIMIZATION FOR OQHO INTERCONNECTION}
\label{sec:two}

Consider  the approximate memory decoherence time maximization (\ref{tauhatmax})  for a coherent feedback interconnection of two OQHOs from \cite[Section~8]{VP_2023_SCL} and \cite{VP_2024_ANZCC},  which interact with external input bosonic fields and    are coupled to each other through a direct energy coupling and an indirect field-mediated coupling \cite{ZJ_2011a}; see
Fig.~\ref{fig:system}.
\begin{figure}[htbp]
\centering
\unitlength=0.85mm
\linethickness{0.4pt}
\begin{picture}(50.00,15.00)
    \put(10,1){\framebox(10,10)[cc]{\scriptsize OQHO${}_1$}}
    \put(30,1){\framebox(10,10)[cc]{\scriptsize OQHO${}_2$}}
    \put(0,6){\vector(1,0){10}}
    \put(50,6){\vector(-1,0){10}}
    \put(20,6){\vector(1,0){10}}
    \put(30,6){\vector(-1,0){10}}
    \put(20,2){\vector(1,0){10}}
    \put(30,10){\vector(-1,0){10}}
    \put(-2,6){\makebox(0,0)[rc]{\small $W^{(1)}$}}
    \put(25,14){\makebox(0,0)[cc]{\small $Y^{(2)}$}}
    \put(52,6){\makebox(0,0)[lc]{\small $W^{(2)}$}}
    \put(25,-1){\makebox(0,0)[cc]{\small $Y^{(1)}$}}
\end{picture}
\caption{
    A coherent feedback interconnection of two OQHOs, interacting with external input quantum Wiener processes  $W^{(1)}$, $W^{(2)}$  and coupled to each other through a direct energy coupling (represented by a double-headed arrow) and a field-mediated coupling through the quantum Ito processes $Y^{(1)}$, $Y^{(2)}$ at the corresponding outputs.
}
\label{fig:system}
\end{figure}
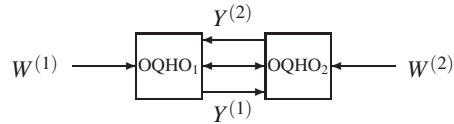
In this  setting, the
quantum Wiener processes  $W^{(1)}$, $W^{(2)}$ of the external  fields  of even dimensions $m_1$, $m_2$ are in the vacuum states $\ups_1$, $\ups_2$ on symmetric Fock spaces $\fF_1$, $\fF_2$, respectively, so that $W^{(1)}$, $W^{(2)}$   commute with and are statistically independent of each other.  The augmented quantum Wiener process
$    W
    :=
    {\small\begin{bmatrix}
      W^{(1)}\\
      W^{(2)}
    \end{bmatrix}}
$ of dimension $m:= m_1+m_2$
on the composite Fock space $\fF:= \fF_1\ox \fF_2$ has the quantum Ito matrix $\Omega
    =
    {\small\begin{bmatrix}
      \Omega_1 & 0\\
      0 & \Omega_2
    \end{bmatrix}}$ in (\ref{Omega}) coming from the individual Ito tables $
    \rd W^{(k)}\rd W^{(k)}{}^{\rT} = \Omega_k \rd t$, where  $
    J=
    {\small\begin{bmatrix}
      J_1 & 0\\
      0 & J_2
    \end{bmatrix}}$,
$
    \Omega_k
     := I_{m_k} + iJ_k$ and
$
    J_k:= I_{m_k/2} \ox \bJ
$,
with the matrix
$\bJ$ from (\ref{JJ}). The component OQHOs are endowed with initial spaces $\fH_k$ and vectors $X^{(k)}$ of even numbers $n_k:= 2\nu_k$ of dynamic variables on the composite system-field space (\ref{fH}), where $\fH_0:= \fH_1 \ox \fH_2$. Accordingly, the vector $    X
    :=
    {\small\begin{bmatrix}
      X^{(1)}\\
      X^{(2)}
    \end{bmatrix}}
$ of $n:= n_1+n_2$ system variables with $\nu = \nu_1 + \nu_2$ in (\ref{nnu})  for the augmented OQHO   satisfies (\ref{Xqpcomm}) with the CCR matrix
\begin{equation}
\label{TTT}
     \Theta
    :=
    {\begin{bmatrix}
      \Theta_1 & 0\\
      0 & \Theta_2
    \end{bmatrix}}
\end{equation}
which consists of the individual CCR matrices $\Theta_k = \frac{1}{2}I_{\nu_k} \ox \bJ$:
\begin{equation}
\label{Theta12}
    [X^{(1)},
      X^{(2)}{}^\rT] = 0,
      \qquad
    [X^{(k)},
      X^{(k)}{}^\rT] = 2i\Theta_k,
      \qquad
      k = 1, 2.
\end{equation}
In addition to the individual Hamiltonians $H_k:= \frac{1}{2} X^{(k)}{}^\rT R_k X^{(k)}$ of the OQHOs, parameterized by their energy matrices $R_k = R_k^\rT \in \mR^{n_k \x n_k}$, the direct energy  coupling (see Fig.~\ref{fig:system}) contributes the term
\begin{equation}
\label{Hint}
    H_{12}
    :=
    X^{(1)}{}^\rT R_{12} X^{(2)} = X^{(2)}{}^\rT R_{21} X^{(1)}
\end{equation}
specified by
\begin{equation}
\label{R12}
    R_{12} = R_{21}^\rT \in \mR^{n_1\x n_2}
\end{equation}
in view of the commutativity in (\ref{Theta12}). Since there is also an additional indirect coupling of the OQHOs, which is mediated by their output fields $Y^{(1)}$, $Y^{(2)}$ of even dimensions $r_1$, $r_2$, the resulting interconnection is
governed by the QSDEs \cite{VP_2024_ANZCC}
\begin{align}
\label{x}
    \rd X^{(k)}
    & =
    (A_k X^{(k)} + F_k X^{(3-k)}) \rd t  +  B_k \rd W^{(k)}  + E_k \rd Y^{(3-k)} ,\\
\label{y}
    \rd Y^{(k)}
    & =
    C_k X^{(k)} \rd t  +  D_k \rd W^{(k)} ,
    \qquad
    k = 1, 2.
\end{align}
The matrices
$    A_k\in \mR^{n_k\x n_k}$,
    $
    B_k\in \mR^{n_k\x m_k}$,
    $
    C_k\in \mR^{r_k\x n_k}$,
    $
    E_k\in \mR^{n_k\x r_{3-k}}$,
    $
    F_k\in \mR^{n_k\x n_{3-k}}
$
are parameterized as
\begin{align}
\label{Ak_Bk}
    A_k
     & =
    2\Theta_k(R_k + M_k^{\rT}J_k M_k + N_k^{\rT}\wt{J}_{3-k}N_k),
    \quad
        B_k
     = 2\Theta_k M_k^{\rT},\\
\label{Ck_Ek_Fk}
    C_k  & =2D_kJ_k M_k,
    \qquad
    E_k  = 2\Theta_k N_k^{\rT},
    \qquad
    F_k  = 2\Theta_k R_{k,3-k},
\end{align}
and the matrices     $
    D_k\in \mR^{r_k\x m_k}$ consist of  $r_k \< m_k$ conjugate rows of permutation matrices of orders $m_k$ with $D_k \Omega_k D_k^\rT = I_{r_k} + i\wt{J}_k$, where
$    \wt{J}_k:= D_kJ_kD_k^{\rT}  = I_{r_k/2}\ox \bJ$.
Here,
$M_k\in \mR^{m_k \x n_k}$,  $N_k\in \mR^{r_{3-k} \x n_k}$ are the matrices of coupling of the $k$th  OQHO to its external input field $W^{(k)}$ and the output $Y^{(3-k)}$ of the other OQHO, respectively.  By (\ref{x}), (\ref{y}),  the composite OQHO in Fig.~\ref{fig:system}  is governed by the first QSDE in (\ref{dXdY_OQHO}) with the matrices
\begin{equation}
\label{cAB}
    A
    =
    {\begin{bmatrix}
        A_1 & F_1+ E_1C_2\\
        F_2 +E_2C_1 & A_2
    \end{bmatrix}},
    \qquad
    B
    =
    {\begin{bmatrix}
        B_1 & E_1D_2\\
        E_2D_1 & B_2
    \end{bmatrix}}
\end{equation}
and,
by (\ref{Ak_Bk}), (\ref{Ck_Ek_Fk}),  has the following energy and coupling matrices:
\begin{equation}
\label{Rclos_Nclos}
    R  = R_* +   {\begin{bmatrix}
    0 & R_{12}\\
    R_{12}^\rT  & 0
  \end{bmatrix}},
  \qquad
    M   =
    {\begin{bmatrix}
      M_1 & D_1^{\rT}N_2 \\
      D_2^{\rT}N_1 & M_2
    \end{bmatrix}}.
\end{equation}
Here, the matrix
\begin{equation}
\label{R*}
    R_*
    :=
    {\begin{bmatrix}
      R_1                                       & N_1^{\rT}D_2J_2 M_2 -M_1^\rT J_1 D_1^\rT N_2\\
      N_2^{\rT}D_1J_1 M_1 -M_2^\rT J_2 D_2^\rT N_1  & R_2
    \end{bmatrix}}
\end{equation}
is associated with the field-mediated coupling between the constituent OQHOs, their coupling to the external fields,  and the individual Hamiltonians. Therefore, if the matrices $M_k$, $N_k$, $R_k$  are fixed for all $k=1,2$, and hence, so also is $R_*$ in (\ref{R*}), then the energy matrix $R$ in (\ref{Rclos_Nclos}) can only be varied over a proper affine subspace in the subspace of real symmetric matrices of order $n$ by varying the direct energy coupling matrix $R_{12}$. This leads to  the following formulation of the approximate memory decoherence time maximization problem
(\ref{tauhatmax}):
\begin{equation}
\label{tauhatmaxR12}
  \wh{\tau}(\eps) \longrightarrow \sup
  \qquad{\rm over}\
  R_{12}\in \mR^{n_1\x n_2}.
\end{equation}
This setting is similar to that in \cite{VP_2024_ANZCC}, except that we are concerned here with  a subsystem of the closed-loop OQHO which is partially isolated from the external fields $W^{(1)}$, $W^{(2)}$.

\begin{thm}
\label{th:R12}
Suppose the OQHO interconnection,  described by (\ref{TTT})--(\ref{R*}),  and the matrix $F$ in (\ref{FF}) satisfy the conditions of Theorem~\ref{th:tauasy}.  Then the direct energy coupling matrix $R_{12}$ in (\ref{Hint}), (\ref{R12}) solves the problem (\ref{tauhatmaxR12}) with the approximate memory decoherence time $\wh{\tau}$ in
(\ref{tau0asy}) for the composite OQHO if and only if
\begin{equation}
\label{gK0}
    g(R_{12}) + K = 0,
\end{equation}
where $g$ is a linear operator acting on a matrix $N \in \mR^{n_1 \x n_2}$  as
\begin{align}
\nonumber
    g(N)
     := &
    \Theta_1 \Sigma_{11}\Theta_1 N P_{22}   +
    P_{11}N \Theta_2 \Sigma_{22}\Theta_2    \\
\label{g}
     & +   \Theta_1 \Sigma_{12}\Theta_2 N^\rT  P_{12}
    +
    P_{12}N^\rT \Theta_1 \Sigma_{12}\Theta_2    ,
\end{align}
and $K \in \mR^{n_1\x n_2}$ is an auxiliary matrix  which is computed as
\begin{equation}
\label{K}
    K:=
    2
    (\bS
    (\Theta \Sigma \Theta R_*P))_{12}
\end{equation}
in terms of (\ref{FF}),  (\ref{EXX0}),  (\ref{R*}),
with $(\cdot)_{jk}$ the appropriate matrix blocks, and $\bS(\gamma):= \frac{1}{2}(\gamma+\gamma^\rT)$ the matrix symmetrizer. \hfill$\square$
\end{thm}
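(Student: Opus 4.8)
The plan is to convert the maximization of $\wh{\tau}(\eps)$ over $R_{12}$ into an unconstrained convex quadratic minimization and to read off its first-order optimality condition. By Theorem~\ref{th:tauasy} and (\ref{GP}), with $F$, $P$ (hence $\|F\sqrt{P}\|$) fixed and (\ref{Fpos}), (\ref{Gpos}) in force, the problem (\ref{tauhatmaxR12}) is equivalent to minimizing over $R_{12}\in\mR^{n_1\x n_2}$ the function
\begin{equation}
\label{JR12}
    J(R_{12})
    :=
    \tfrac{1}{4}\|G\sqrt{P}\|^2
    =
    \|F\Theta R\sqrt{P}\|^2
    =
    \bra \Theta^\rT\Sigma\Theta R P,\, R\ket ,
\end{equation}
where $R = R_* + \Lambda(R_{12})$ depends affinely on $R_{12}$ through (\ref{Rclos_Nclos}), with the embedding $\Lambda(N) := {\small\begin{bmatrix}0 & N\\ N^\rT & 0\end{bmatrix}}$, and where I use $\Sigma = F^\rT F$ from (\ref{FF}), $P = \sqrt{P}\sqrt{P}$ from (\ref{EXX0}), the symmetry of $R$, $P$, $\Sigma$, $\Theta^\rT\Sigma\Theta$, and the cyclic property of the trace. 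Since $\Theta^\rT = -\Theta$ (because $\bJ^\rT = -\bJ$ in (\ref{JJ})), one has $\Theta^\rT\Sigma\Theta = -\Theta\Sigma\Theta\succcurlyeq 0$, so together with $P\succcurlyeq 0$ the quadratic form $R\mapsto\bra\Theta^\rT\Sigma\Theta RP, R\ket$ on the space of real symmetric matrices of order $n$ is convex; composed with the affine map $R_{12}\mapsto R_* + \Lambda(R_{12})$ this makes $J$ a convex quadratic function on $\mR^{n_1\x n_2}$, and hence $R_{12}$ solves (\ref{tauhatmaxR12}) if and only if $J$ is stationary at $R_{12}$.

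To compute the variation, I differentiate (\ref{JR12}): for an increment $\delta := \rd R_{12}$ the corresponding $\rd R = \Lambda(\delta)$ is symmetric, and symmetry of $\Theta^\rT\Sigma\Theta$, $P$ together with the cyclic property of the trace give $\rd J = \bra C + C^\rT,\, \Lambda(\delta)\ket$ with $C := \Theta^\rT\Sigma\Theta R P$. For any $\gamma\in\mR^{n\x n}$ partitioned conformally with $n = n_1 + n_2$ into blocks $\gamma_{jk}$, one has $\bra\gamma,\Lambda(\delta)\ket = \bra\gamma_{12} + \gamma_{21}^\rT,\, \delta\ket = 2\bra(\bS(\gamma))_{12},\, \delta\ket$, since $\gamma_{12} + \gamma_{21}^\rT = 2(\bS(\gamma))_{12}$; applying this to $\gamma = C + C^\rT$ gives $\rd J = 4\bra(\bS(\Theta^\rT\Sigma\Theta R P))_{12},\, \delta\ket = -4\bra(\bS(\Theta\Sigma\Theta R P))_{12},\, \delta\ket$. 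Hence stationarity of $J$ at $R_{12}$ is equivalent to
\begin{equation}
\label{statcond}
    (\bS(\Theta\Sigma\Theta R P))_{12} = 0 .
\end{equation}

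Finally, I would substitute $R = R_* + \Lambda(R_{12})$ into (\ref{statcond}) and use linearity of $\bS(\cdot)$ and of the block selector $(\cdot)_{12}$, which splits the left-hand side as $(\bS(\Theta\Sigma\Theta R_* P))_{12} + (\bS(\Theta\Sigma\Theta\,\Lambda(R_{12})\,P))_{12}$; the first term equals $\tfrac12 K$ by the definition (\ref{K}). For the second term, a block computation of $\Theta\Sigma\Theta\,\Lambda(N)\,P$ using $\Theta = \diag(\Theta_1,\Theta_2)$ from (\ref{TTT}) and the $2\x 2$ block structures of $\Sigma$ and $P$, followed by symmetrizing the $(1,2)$ and $(2,1)$ blocks with $\Theta_k^\rT = -\Theta_k$, $\Sigma_{21} = \Sigma_{12}^\rT$, $P_{21} = P_{12}^\rT$, shows that $2(\bS(\Theta\Sigma\Theta\,\Lambda(N)\,P))_{12}$ equals the operator $g(N)$ in (\ref{g}). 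Thus (\ref{statcond}) reads $\tfrac12 g(R_{12}) + \tfrac12 K = 0$; multiplying by $2$ yields (\ref{gK0}), and the convexity noted above upgrades this necessary condition to the claimed ``if and only if''.

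The step I expect to be the main obstacle is this last one: carrying out the block multiplication $\Theta\Sigma\Theta\,\Lambda(N)\,P$ and its symmetrization carefully enough to recover the precise four-term form of $g$ in (\ref{g}), while correctly tracking the sign changes produced by $\Theta_k^\rT = -\Theta_k$ and by the symmetrization of the off-diagonal blocks. The reduction to a convex quadratic minimization and the resulting stationarity characterization are routine once the affine dependence of $R$ on $R_{12}$ in (\ref{Rclos_Nclos}) and the positive semi-definiteness of $\Theta^\rT\Sigma\Theta$ and $P$ are invoked.
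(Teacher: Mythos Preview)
Your proposal is correct and follows essentially the same approach as the paper: both reduce (\ref{tauhatmaxR12}) to an unconstrained convex quadratic minimization of $\|F\Theta R\sqrt{P}\|^2$ (up to a harmless constant factor) via the affine dependence of $R$ on $R_{12}$ in (\ref{Rclos_Nclos}), compute the first variation to obtain the stationarity condition $(\bS(\Theta\Sigma\Theta RP))_{12}=0$, and then split this into the $R_*$-part (yielding $K$) and the $\Lambda(R_{12})$-part (yielding $g(R_{12})$). The paper likewise does not spell out the block multiplication you flag as the main obstacle; your outline of how to recover (\ref{g}) from $2(\bS(\Theta\Sigma\Theta\,\Lambda(N)\,P))_{12}$ using $\Theta=\diag(\Theta_1,\Theta_2)$, $\Theta_k^\rT=-\Theta_k$, $\Sigma_{21}=\Sigma_{12}^\rT$, $P_{21}=P_{12}^\rT$ is exactly what is needed.
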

\begin{proof}
As mentioned above, (\ref{tauhatmaxR12}) reduces to the minimization of the quantity (\ref{GP}), which is equivalent to minimizing the convex quadratic function
\begin{equation}
\label{fR12}
    f(R_{12})
    :=
    \frac{1}{2}\|F\Theta R \sqrt{P}\|^2
\end{equation}
over $R_{12} \in \mR^{n_1\x n_2}$, where the $\frac{1}{2}$-factor  is introduced for convenience, and $R$ is an affine function of $R_{12}$ in (\ref{Rclos_Nclos}).   Therefore, $R_{12}$ delivers a global minimum to (\ref{fR12}) if and only if it makes the Frechet derivative \cite{RS_1980} of $f$ (on the Hilbert space $\mR^{n_1\x n_2}$ with the Frobenius inner product $\bra\cdot, \cdot\ket $)  vanish:
\begin{equation}
\label{f'0}
    f'(R_{12}) = 0.
\end{equation}
In view of the variational identity $\delta (\|\phi\|^2) = 2\bra \phi, \delta \phi\ket $, the first variation of (\ref{fR12}) with respect to the matrix $R_{12}$ takes the form
\begin{align}
\nonumber
    \delta f(R_{12})
    & =
    \bra
    F\Theta R \sqrt{P},
    F\Theta (\delta R) \sqrt{P}
    \ket
    =
    -
    \bra
    \Theta F^\rT F\Theta R P,
    \delta R
    \ket\\
\nonumber
    & =
    -
    \bra
    (\Theta \Sigma\Theta R P)_{12},
    \delta R_{12}
    \ket
    -
    \bra
    (\Theta \Sigma\Theta R P)_{21},
    \delta R_{12}^\rT
    \ket\\
\label{deltaf}
    & =
    -
    \bra
    (\Theta \Sigma\Theta R P)_{12}
    +
    (\Theta \Sigma\Theta R P)_{21}^\rT,
    \delta R_{12}
    \ket,
\end{align}
where use is also made of  the antisymmetry of $\Theta$ in (\ref{TTT}) and symmetry of $P$ in (\ref{EXX0}) along with (\ref{FF}) and $\delta R  = {\small\begin{bmatrix}
  0 & \delta R_{12}\\
  \delta R_{12}^\rT & 0
\end{bmatrix}}$ since the matrix $R_*$ in (\ref{Rclos_Nclos}) is fixed. From (\ref{deltaf}), it follows  that
\begin{equation}
\label{f'}
    -f'(R_{12})
    =
    2(\bS(\Theta \Sigma\Theta R P))_{12}
    =
    g(R_{12}) + K,
\end{equation}
with $g$, $K$ from (\ref{g}), (\ref{K}). By (\ref{f'}), (\ref{f'0}) is equivalent to (\ref{gK0}), thus establishing
the latter as a necessary and sufficient condition of optimality for (\ref{tauhatmaxR12}).
\end{proof}

In view of (\ref{f'}), the linear map $g$ in (\ref{g}) is a negative semi-definite self-adjoint operator on $\mR^{n_1\x n_2}$  related to the second Frechet derivative $f''$  of the function $f$ in (\ref{fR12}) by $g = -f''$. Also note that the linear equation (\ref{gK0}) can be solved by vectorizing \cite{M_1988} the matrix $R_{12}$.


\begin{thebibliography}{99}{
\bibitem{A_1989}
V.I.Arnold, \emph{Mathematical Methods of Classical Mechanics}, 2nd Ed., Springer, New York, 1989.


\bibitem{DJ_2006}
C.D'Helon, and M.R.James, Stability, gain, and robustness in quantum feedback networks,  \textit{Phys. Rev.  A}, vol. 73, no. 5,  2006, 053803.

\bibitem{EMPUJ_2016}
L.A.D.Espinosa, Z.Miao, I.R.Petersen, V.Ugrinovskii, and
M.R.James,
Physical realizability and preservation of
commutation and anticommutation relations for
$n$-level quantum systems,
\textit{SIAM J. Control Optim.},
vol. 54, no. 2, 2016, pp. 632--661.



\bibitem{FCHJ_2016}
S.Fu, A.R.R.Carvalho, M.R.Hush, and M.R.James,
Cross-phase modulation and entanglement in a compound
gradient echo memory,
\emph{Phys. Rev. A},  vol. 93, 2016, 023809.


\bibitem{GKS_1976}
V.Gorini, A.Kossakowski, E.C.G.Sudarshan, Completely positive dynamical semigroups of $N$-level systems,
\emph{J. Math. Phys.}, vol. 17, no. 5, 1976, pp. 821--825.


\bibitem{HEHBANS_2016}
K.Heshami, D.G.England, P.C.Humphreys, P.J.Bustard, V.M.Acosta, J.Nunn, B.J.Sussman,  Quantum memories: emerging applications and recent advances, \emph{ J. Mod. Opt.},  vol. 63, no. 20, 2016,  pp. 2005--2028.



\bibitem{H_2001}
A.S.Holevo, \textit{Statistical Structure of Quantum Theory}, Springer, Berlin, 2001.

\bibitem{H_2019}
A.S.Holevo, \textit{Quantum Systems, Channels, Information: A Mathematical Introduction}, De Gruyter, Berlin, 2019.

\bibitem{HJ_2007}
R.A.Horn, and C.R.Johnson,
\textit{Matrix Analysis},
Cambridge
University Press, New York, 2007.

\bibitem{HP_1984}
R.L.Hudson,  and K.R.Parthasarathy,
Quantum Ito's formula and stochastic evolutions,
\textit{Commun. Math. Phys.}, vol.  93, 1984, pp. 301--323.

\bibitem{JNP_2008}
M.R.James, H.I.Nurdin, and I.R.Petersen,
$H^{\infty}$ control of
linear quantum stochastic systems,
\textit{IEEE Trans.
Automat. Contr.}, vol. 53, no. 8, 2008, pp. 1787--1803.


\bibitem{KS_1991}
I.Karatzas, and S.E.Shreve,
\emph{Brownian Motion and Stochastic Calculus}, 2nd Ed.,
Springer-Verlag,  New York, 1991.


\bibitem{LL_1981}
L.D.Landau, and E.M.Lifshitz, \emph{Quantum Mechanics: Non-relativistic Theory}, 3rd Ed., Butterworth-Heinemann, Amsterdam, 1981.

\bibitem{L_1976}
G.Lindblad, On the generators of quantum dynamical semigroups,
\emph{Comm. Math. Phys.}, vol. 48, 1976, pp. 119--130.



\bibitem{M_1988}
J.R.Magnus, \textit{Linear Structures}, Oxford University Press, New York, 1988.

\bibitem{NC_2000}
M.A.Nielsen, and I.L.Chuang,
\textit{Quantum Computation and Quantum Information},
Cambridge University Press, Cambridge, 2000.


\bibitem{NY_2017}
H.I.Nurdin, and N.Yamamoto,
\textit{Linear Dynamical Quantum Systems},
Springer, Netherlands, 2017.


\bibitem{P_1992}
K.R.Parthasarathy,
\textit{An Introduction to Quantum Stochastic Calculus},
Birk\-h\"{a}user, Basel, 1992.


\bibitem{P_2017}
I.R.Petersen,
Quantum linear systems theory,
\textit{Open Automat. Contr. Syst. J.},
vol. 8, 2017, pp. 67--93.

\bibitem{RS_1980}
M.Reed, and B.Simon, \emph{Methods of Modern Mathematical Physics: Functional Analysis}, vol. 1, 2nd Ed., Academic Press, San Diego, 1980.


\bibitem{S_1994}
J.J.Sakurai,
\textit{Modern Quantum Mechanics},
 Addison-Wesley, Reading, Mass., 1994.

\bibitem{S_1996}
A.N.Shiryaev, \emph{Probability}, 2nd Ed., Springer, New York, 1996.



\bibitem{VPJ_2018a}
I.G.Vladimirov, I.R.Petersen, and M.R.James, Multi-point Gaussian states, quadratic-exponential cost functionals, and large deviations estimates for linear quantum stochastic systems, \textit{Appl. Math. Optim.}, vol. 83, 2021, pp. 83--137 (published 24 July 2018).


\bibitem{VP_2022_SIAM}
I.G.Vladimirov, and I.R.Petersen,
Moment dynamics and observer design for a class of quasilinear quantum stochastic systems,
\emph{SIAM J. Control Optim.}, vol. 60, no. 3, 2022, pp. 1223--1249.


\bibitem{VP_2023_SCL}
I.G.Vladimirov, and I.R.Petersen,
Decoherence quantification through commutation relations decay for open quantum harmonic oscillators,
\emph{Systems \& Control Letters},
vol. 178,
2023,
105585.


\bibitem{VP_2024_ANZCC}
I.G.Vladimirov,  and I.R.Petersen,
Decoherence time in quantum harmonic oscillators as quantum memory systems, 2024 Australian and New Zealand Control Conference (ANZCC), Gold Coast, Australia, 2024, pp. 211--216.



\bibitem{VP_2024_EJC}
I.G.Vladimirov,  and I.R.Petersen, Decoherence time control by interconnection for finite-level quantum memory systems, \emph{European Journal of Control}, 2024, 101054,
published online 15 June 2024, \url{https://doi.org/10.1016/j.ejcon.2024.101054}.


\bibitem{WM_2008}
D.F.Walls, and G.J.Milburn,
\emph{Quantum Optics}, Springer, Berlin, 2008.

\bibitem{WLZ_2019}
Y.Wang, J.Li, S.Zhang, et al,  Efficient quantum memory for single-photon polarization qubits, \emph{ Nat. Photonics}, vol. 13, 2019, pp. 346--351.


\bibitem{YJ_2014}
N.Yamamoto,  and M.R.James,
Zero-dynamics principle for perfect quantum
memory in linear networks,
\emph{New J. Phys.}, vol.  16,  2014, 073032.


\bibitem{ZD_2022}
G.Zhang, and Z.Dong,
Linear quantum systems: a tutorial,
\emph{Annual Reviews in Control},
vol. 54,
2022,
pp. 274--294.

\bibitem{ZJ_2011a}
G.Zhang, and M.R.James,  Direct and indirect couplings in coherent feedback control of linear quantum systems,
\textit{IEEE Trans. Automat. Contr.}, vol. 56, no. 7, 2011, 1535--1550.

}
\end{thebibliography}
\end{document}